\documentclass[12pt,english]{article}
\usepackage[utf8]{inputenc}
\usepackage[english]{babel}
\usepackage{float,caption,subcaption}
\usepackage{booktabs,longtable}
\usepackage{amsmath}
\usepackage{amssymb,amsthm,amsfonts}
\usepackage{enumerate}
\usepackage{tikz}
\usepackage[authoryear,round]{natbib}
\usepackage{bibentry}
\usepackage[hyphens]{url}
\usepackage{hyperref}
\usepackage[a4paper,body={16cm,23.7cm}]{geometry}

\hypersetup{colorlinks=true, linkcolor=cyan, citecolor=cyan, urlcolor=black, breaklinks=true}
\newtheorem{theorem}{Theorem}

\newtheorem{proposition}{Proposition}
\newtheorem{remark}{Remark}

\let\emptyset\varnothing
\DeclareMathOperator{\NP}{NP}
\DeclareMathOperator{\NfP}{NfP}
\newcommand{\V}{\mathcal{V}}

\begin{document}

\title{\textbf{NULL PLAYER NEUTRALITY IN TU-GAMES: EGALITARIAN AND SHAPLEY SOLUTIONS}\thanks{Juan Carlos Gon\c{c}alves-Dosantos acknowledges the grants PID2021-12403030NB-C31 and PID2022-137211NB-I00 funded by MCIN/AEI/10.13039/501100011033 and by ``ERDF A way of making Europe/EU''. Ricardo Mart\'inez acknowledges the R\&D\&I project grant PID2023-147391NB-I00 funded by MCIN AEI/10.13039/501100011033 and by ``ERDF A way of making Europe/EU''. Joaqu\'in S\'anchez-Soriano acknowledges the grant PID2022-137211NB-I00 funded by MCIN/AEI/10.13039/501100011033 and by ``ERDF A way of making Europe/EU'' and the CIPROM/2024/34 grant, funded by the Conselleria de Educación, Cultura, Universidades y Empleo, Generalitat Valenciana.}}
\author{Juan Carlos Gon\c{c}alves-Dosantos\thanks{email: jgoncalves@umh.es} \\ Universidad Miguel Hern\'andez de Elche \and Ricardo Mart\'inez \\ Universidad de Granada \and Joaqu\'in S\'anchez-Soriano \\ Universidad Miguel Hern\'andez de Elche}
\date{}
\maketitle

\begin{abstract}
We introduce and study the axiom of \emph{null player neutrality} in the context of cooperative games with transferable utility (TU-games). This axiom weakens the classical coalitional strategic equivalence: rather than requiring that augmenting a game by a null-player game leaves that player's payoff unchanged, it only requires that any change in payoff be independent of the specific augmenting game, provided both the null-player condition and the grand-coalition value are preserved. We show that efficiency, linearity, symmetry, and null player neutrality together characterize the family of all real linear combinations of the Shapley value and the equal division solution, a family that strictly extends the well-known class of $\alpha$-egalitarian Shapley values (convex combinations, $\alpha \in [0,1]$) to arbitrary $\alpha \in \mathbb{R}$. Replacing null player neutrality by its natural analogue for nullifying players uniquely pins down the equal division solution. 
\end{abstract}

\textbf{Keywords}: TU-games, null player property, null player neutrality, nullifying player neutrality, Shapley solution, Equal division solution.

\textbf{JEL Codes}: C71, D71

\newpage

\section{Introduction}

Cooperative game theory with transferable utility (TU-games) provides a rigorous analytical framework for the study of environments in which groups of agents generate value through collective action, as well as for the analysis of how this value may be allocated among participants. Since the foundational contributions of the 1950s, the literature has been largely structured around two polar allocation principles. At one extreme lies the \emph{Shapley value} \citep{shapley1953value}, which assigns to each player a payoff equal to her expected marginal contribution across all possible coalitions she may join, thereby embodying a strictly marginalist perspective. At the opposite extreme stands the \emph{equal division solution}, which allocates the worth of the grand coalition uniformly among all players, reflecting a fully egalitarian principle that is entirely insensitive to individual productivity or contribution. Despite their philosophical differences, both solutions satisfy the fundamental structural axioms of \emph{efficiency}, \emph{linearity}, and \emph{symmetry}, and both have been widely axiomatized and applied in a wide range of fields, including Economics, Political Science, and Operations Research. The Shapley value, in particular, has been the object of a rich body of axiomatic characterizations. Beyond Shapley's original axiomatization via the null player property, \citet{Young85} derives it from strong monotonicity; \citet{Hart89} obtain it through a potential-function approach that extends naturally to non-transferable utility games; and \citet{Casajus17b} characterize it via differential marginality, which requires that the payoff difference between two players track the difference in their marginal contributions. By contrast, the equal division solution has received less systematic axiomatic attention; prominent characterizations are due to \citet{Brink07}. Between these extremes lies the one-parameter family of \emph{egalitarian Shapley values} \citep{Joosten96}, which interpolate between marginalism and egalitarianism.

The main contribution of this paper is the introduction and systematic analysis of a new axiom, termed \emph{null player neutrality}, which offers an alternative to the classical \emph{null player property} and its variants. The null player property, a cornerstone of Shapley's original axiomatization, requires that any player who contributes nothing to every coalition receive a zero payoff. While compelling from a strict marginalist perspective, it embeds nontrivial normative assumptions that deserve careful scrutiny, as they may not be universally acceptable. Taken at face value, the null player property enshrines the principle that only productive contribution justifies a claim on collectively generated surplus, thereby excluding from any share of social output those agents who, for whatever reason, happen to make no measurable marginal contribution. This perspective may sit uncomfortably with some ethical positions. From a Rawlsian standpoint \citep{Rawls71}, for example, the claim of the least advantaged members of a society should not depend on their productive output; the null player property, by contrast, offers such members nothing. More broadly, the property is in tension with welfarist and needs-based principles of distributive justice, which hold that the grounds for a share of collective wealth may include need, vulnerability, or mere membership in the cooperative enterprise, rather than marginal productivity alone \citep{Sen80}. These concerns also arise naturally in applied contexts, where the null player property would, for instance, legitimize the exclusion of unemployed individuals from shares of national income or social security benefits. In cooperative game theory, related concerns have motivated solidarity-based principles designed to protect less favored individuals.  \citet{Nowak94} introduce the solidarity value, which replaces each player's marginal contribution with the average marginal contribution of her coalition, thereby ensuring that null players are compensated whenever their partners are productive. The broader conflict between egalitarianism and marginalism in cooperative environments is analyzed by \citet{Dutta89}, who characterize the egalitarian solution under participation constraints.

The axiom we propose, \emph{null player neutrality}, relaxes the requirement imposed by the null player property. Building on the augmentation framework underlying \emph{coalitional strategic equivalence} \citep{Chun91}, it considers the addition of games in which a given player is null, but requires only that any resulting change in that player's payoff be independent of the particular null game used, provided that the grand coalition's worth remains fixed. Formally, for any three games $v,w,u$ such that $w(N) = u(N)$ and player $i$ is null in both $w$ and $u$, null player neutrality requires $\varphi_i(v+w) = \varphi_i(v+u)$. This axiom is less demanding than both the null player property and coalitional strategic equivalence. In addition, it is satisfied by the Shapley value, the equal division solution, and all convex combinations thereof, as well as by a substantially broader class of allocation schemes, none of which are reachable through the null player property or coalitional strategic equivalence alone.

Our first main result (Theorem \ref{thm1}) shows that efficiency, linearity, symmetry, and null player neutrality together characterize precisely the family $\{\alpha\varphi^{ED}+(1-\alpha)\varphi^{Sh}:\alpha\in\mathbb{R}\}$. This is a direct generalization of Shapley's classical characterization: weakening the null player property to null player neutrality, while preserving the remaining structural axioms, enlarges the set of admissible solutions from a single point to a one-parameter family. The subfamily corresponding to $\alpha\in[0,1]$ recovers the $\alpha$-egalitarian Shapley values of \citet{Joosten96}, axiomatized by \citet{Brink13} using \emph{weak monotonicity}. Our characterization replaces that condition with null player neutrality, and thereby extends the admissible family to all $\alpha\in\mathbb{R}$. We also compare our result with the characterization of \citet{Casajus13}, who use the \emph{null player in a productive environment} property to pin down the subfamily with $\alpha \geq 0$. We show that, under linearity, their axiom implies null player neutrality but not vice versa, explaining the strict enlargement of the family obtained in Theorem \ref{thm1}.

We also provide a dual analysis for \emph{nullifying players} (players whose presence reduces every coalition's value to zero \citep{Brink07}). Parallel to null player neutrality, we introduce the property of \emph{nullifying player neutrality}, which requires that a nullifying player's payoff remain unchanged across games with the same grand-coalition value. Unlike the null player case, this axiom is equivalent to coalitional standard equivalence \citep{Brink07}, since the zero game is nullifying for every player. Under linearity, nullifying player neutrality is also equivalent to the nullifying player property, in contrast to the weaker relationship found for null players. We show that efficiency, symmetry, and nullifying player neutrality uniquely characterize the equal division solution (Theorem \ref{thm2}). Together with Theorem \ref{thm1}, this yields a clear dichotomy. Null player neutrality permits all linear combinations of the Shapley and equal division solutions, whereas nullifying player neutrality singles out the latter alone.

The remainder of the paper is organized as follows. Section \ref{prel} introduces the notation and definitions. Section \ref{NPN} presents null player neutrality, establishes its logical relationships with related axioms, and proves our main characterization result (Theorem \ref{thm1}). Section \ref{sec_further} develops the parallel analysis for nullifying players. Section \ref{sec_conclusion} collects concluding remarks and directions for future research.

\section{Preliminaries}\label{prel}

A cooperative game with transferable utility (a TU-game) models situations in which a finite set of players can generate value through cooperation and this value can be freely transferred among them. Formally, a \textbf{TU-game} is the pair $(N, v)$, where $N$ is a finite set of \textbf{players} (we assume $|N| \geq 3$) and $v : 2^N \to \mathbb{R}$ is the \textbf{characteristic function} satisfying $v(\emptyset) = 0$. Subsets of $N$ are called \textbf{coalitions} and $N$ is the grand coalition. For each $S \subseteq N$, the worth of coalition $S$ is $v(S)$, and we write $s = |S|$ and $n = |N|$. The class of all TU-games on player set $N$ is denoted by $\V^N$. When no ambiguity arises, we identify $(N, v)$ simply by $v$. \footnote{For notational convenience, we write $i$ in place of the singleton $\{i\}$ whenever no ambiguity arises. Accordingly, the expressions $v(i)$, $v(S \cup i)$, and $v(S \setminus i)$ are equivalent to $v(\{i\})$, $v(S \cup \{i\})$, and $v(S \setminus \{i\})$, respectively.}

Two particular classes of TU-games will be useful throughout. Given $T \subseteq N$, $T \neq \emptyset$, the \emph{unanimity game} $u_T$ is defined by $u_T(S) = 1$ if $T \subseteq S$, and $u_T(S) = 0$ otherwise. The \emph{canonical game} $e_T$ is defined by $e_T(S) = 1$ if $T = S$, and $e_T(S) = 0$ otherwise. Both families form standard bases of $\V^N$: for every $v \in \V^N$ there exist unique families of coefficients $(\lambda_T)_{T \subseteq N,\,T\neq\emptyset}$ and $(\gamma_T)_{T \subseteq N,\,T\neq\emptyset}$ such that $v = \sum_{T \subseteq N,\,T\neq\emptyset} \lambda_T u_T = \sum_{T \subseteq N,\,T\neq\emptyset} \gamma_T e_T$.

A \textbf{solution} is a mapping $\varphi : \V^N \to \mathbb{R}^n$ that assigns to each game $v$ a payoff vector $\varphi(v)$ where the component $\varphi_i(v)$ indicates the player $i$'s payoff.

Among the many solutions studied in the literature, two
stand out as opposite focal points that have anchored the debate due to their fundamentally contrasting approaches. The \emph{Shapley solution} \citep{shapley1953value} is grounded in the idea that each player's payoff should reflect their productive importance to the coalition structure. It computes, for each player, the average marginal contribution across all possible orders in which players might join the grand coalition, weighting each order equally. The \emph{equal division solution}, at the opposite extreme, ignores individual contributions entirely and divides the worth of the grand coalition equally among all players.

\textbf{Shapley solution}. For each $v \in \V^N$ and each $i \in N$,
$$
\varphi^{Sh}_i(v) = \sum_{S \subseteq N : i \in S}  \frac{(s-1)!(n-s)!}{n!} \left( v(S) - v(S \backslash i) \right).
$$

\textbf{Equal division solution}. For each $v \in \V^N$ and each $i \in N$,
$$
\varphi^{ED}_i(v) = \frac{v(N)}{n}.
$$

The \emph{$\alpha$-egalitarian Shapley values}, introduced by \citet{Joosten96}, are obtained as convex combinations of the Shapley  and the equal division solutions, indexed by a parameter $\alpha \in [0,1]$ that interpolates between the two polar principles. 

\textbf{$\alpha$-egalitarian Shapley solution}. For each $\alpha \in [0,1]$, and each $v \in \V^N$,
$$
\varphi^\alpha(v) = \alpha \varphi^{ED}(v) + (1-\alpha) \varphi^{Sh} (v).
$$



\section{Null player neutrality}
\label{NPN}

Several axioms have been employed in the literature to characterize the solutions discussed above. We present those most relevant to our analysis, starting with standard structural requirements and then turning to the axiom introduced in this paper.

\emph{Efficiency} states that the sum of the players' payoffs equals the worth of the grand coalition.

\textbf{Efficiency}. For each $v \in \V^N$,
$$
\sum_{i \in N} \varphi_i(N,v) = v(N).
$$

The next property ensures that payoffs respond in a consistent way to scalar transformations and combinations of games.\footnote{Linearity implies the classical \emph{additivity} property, which requires that the payoff assigned to each player in the sum of two games coincides with the sum of that player’s payoffs in the two separate games, i.e., for each $v,w \in \V^N$, $\varphi(v+w)=\varphi(v)+\varphi(w)$.}

\textbf{Linearity}. For each $v,w \in \V^N$ and $a,b\in \mathbb{R}$,
$$
\varphi(av+bw) = a\varphi(v) + b\varphi(w),
$$
where $(av+bw)(S)=av(S)+bw(S)$ for any $S \subseteq N$.

Two players $\{i,j\} \subseteq N$ are \emph{symmetric} in $v$ if they contribute the same to any coalition, that is, if $v(S \cup i) = v(S \cup j)$ for any $S \subseteq N \backslash \{i,j\}$. The following axiom embodies a basic fairness requirement, namely, that symmetric players receive identical payoffs.

\textbf{Symmetry}. For each $v \in \V^N$, if $\{i,j\} \subseteq N$ are symmetric, then $\varphi_i(v)=\varphi_j(v)$.

The following properties concern the treatment of players whose contributions to the value generated through cooperation are negligible. A player $i \in N$ is \emph{null} if they contribute nothing to any coalition, i.e., $v(S \cup i)=v(S)$ for all $S \subseteq N \setminus \{i\}$. For any $v \in \V^N$, $\NP(v)$ denotes the set of null players in $v$. The next axiom requires that null players receive zero.

\textbf{Null player property}. For each $v \in \V^N$ and each $i \in N$, if $i$ is null then $\varphi_i(v)=0$. 

A further principle addressing the implications of being a null player was introduced by \cite{Chun89}, who proposed the axiom of \emph{coalitional strategic equivalence}. It states that if a game is augmented by another game in which a given player is null, then the payoff assigned to that player should remain unchanged.

\textbf{Coalitional strategic equivalence}. For each $v,w \in \V^N$ and each $i \in N$, if $i$  is null in $w$, then $\varphi_i(v+w) = \varphi_i(v)$.

This axiom therefore says that \emph{null-player games} $w$ carry no information for the payoff of the null players. Such games may be viewed as irrelevant perturbations of the original game $v$ from the perspective of null agents. Coalitional strategic equivalence is satisfied by the Shapley solution (as a consequence of the null player property combined with linearity) but is violated by the equal division solution. Hence, the condition is too strong to encompass egalitarian solution concepts.

The next principle, \emph{null player neutrality}, preserves the core insight of coalitional strategic equivalence (that the internal structure of a null-player game should not affect the payoff of a null player) while dropping the requirement that the augmentation has no effect at all. Specifically, the axiom requires that, for any two games in which player $i$ is null and that have the same grand coalition worth, the payoff of $i$ must be the same after augmentation. In other words, a null player is indifferent between any two augmentations that generate the same aggregate value. That is, the internal distribution of worth across coalitions in $w$ and $u$, beyond what reaches the grand coalition, is irrelevant to the player's payoff.

\textbf{Null player neutrality}. For each $v,w,u \in \V^N$ and each $i \in N$, if $w(N)=u(N)$ and $i$ is null in both $w$ and $u$, then $\varphi_i(v+w) = \varphi_i(v+u)$.

The requirement $w(N)=u(N)$ isolates the only channel through which an augmentation may legitimately affect the payoff of a null player. Even when player $i$ is null in both $w$ and $u$, these games may change the total worth available for allocation. Since any efficient solution distributes $v(N)$, changes in $w(N)$, or $u(N)$, may justifiably affect the payoff of all players, including null players.

By restricting attention to perturbations that generate the same worth of the grand coalition, the axiom abstracts from this aggregate effect and focuses exclusively on the internal structure of the games.

It is straightforward to observe that coalitional strategic equivalence implies null player neutrality. Indeed, let $v,w,u \in \V^N$ be such that $w(N) = u(N)$. If $i \in N$ is a null player in both $w$ and $u$, then, by applying coalitional strategic equivalence, we have $\varphi_i(v+w) = \varphi_i(v) = \varphi_i(v+u)$. The converse implication does not hold, as the equal division solution satisfies null player neutrality but fails to satisfy coalitional strategic equivalence. Besides, the Shapley solution also satisfies the axiom, as for a null player $i$, $\varphi_i^{Sh}(w) = 0$ for any game $w$ in which $i$ is null, so changing $w$ to $u$ with $w(N) = u(N)$ does not affect $\varphi_i^{Sh}(v+w) = \varphi_i^{Sh}(v) = \varphi_i^{Sh}(v+u)$. Null player neutrality thus carves out a natural middle ground: it imposes discipline on how solutions treat null players under perturbations, without forcing a specific point allocation on them.

Remark \ref{remark_NPPandNPN} shows that the null player property and null player neutrality are logically independent.

\begin{remark}
\label{remark_NPPandNPN}
The null player property and null player neutrality are independent.
\begin{itemize}
  \item[(i)] The solution $\varphi^1$, defined as follows,
  $$
  \varphi^1_i(v) = 
  \begin{cases}
    \frac{v(i)}{\sum_{j \in N} v(j)} \cdot v(N) & \text{if } \sum_{j \in N} v(j) \neq 0 \\[0.2cm]
    0 & \text{if } \sum_{j \in N} v(j) = 0 \text{ and } i \in \NP(v) \\[0.2cm]
    \frac{v(N)}{|N \backslash \NP(v)|} & \text{if } \sum_{j \in N} v(j) = 0 \text{ and } i \notin \NP(v),
  \end{cases}
  $$
  satisfies the null player property but violates null player neutrality. To see this, consider the games $v, w, u$ on $N = \{1,2,3\}$ whose characteristic functions are specified below:
  \begin{center}
  \begin{tabular}{cccc}
    \toprule
    $S$ & $v(S)$ & $w(S)$ & $u(S)$ \\
    \midrule
    $\varnothing$ & 0 & 0 & 0 \\
    $\{1\}$ & 1 & 0 & 0 \\
    $\{2\}$ & 1 & 2 & 0 \\
    $\{3\}$ & 1 & 0 & 0 \\
    $\{1,2\}$ & 0 & 2 & 0 \\
    $\{1,3\}$ & 0 & 0 & 0 \\
    $\{2,3\}$ & 0 & 2 & 2 \\
    $\{1,2,3\}$ & 2 & 2 & 2 \\
    \bottomrule
  \end{tabular}
  \end{center}
  Player 1 is null in both $w$ and $u$, and $w(N) = u(N) = 2$. However, $\varphi^1_1(v+w) \neq \varphi^1_1(v+u)$, violating null player neutrality.

   \item[(ii)] The equal division solution satisfies null player neutrality but violates the null player property.
\end{itemize}
\end{remark}

Although the null player property and null player neutrality are independent in general, linearity creates a bridge: any linear solution satisfying the null player property also satisfies null player neutrality. The converse, however, does not hold. In particular, the equal division solution $\varphi^{ED}$ is linear and satisfies null player neutrality, yet it violates the null player property. This shows that, under linearity, null player neutrality is a weaker requirement than the null player property.

The following result is well known.

\begin{theorem}[\cite{shapley1953value}]
A solution $\varphi$ satisfies efficiency, linearity, symmetry, and the null player property if and only if it is the Shapley solution, $\varphi \equiv \varphi^{Sh}$.
\end{theorem}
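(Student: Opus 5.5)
The plan is the classical unanimity-basis argument. For existence, I verify that $\varphi^{Sh}$ satisfies the four axioms directly from its formula. Efficiency follows from the random-order interpretation: for each of the $n!$ orderings the marginal contributions telescope to $v(N)$, and $\varphi^{Sh}$ is their average. Linearity is immediate, because each $\varphi^{Sh}_i(v)$ is a fixed linear combination of the numbers $v(S)$. For symmetry, suppose $i$ and $j$ are symmetric in $v$. The bijection on coalitions that swaps $i$ and $j$ sends each marginal contribution of $i$ to an equal marginal contribution of $j$ and preserves the weights, so $\varphi^{Sh}_i(v)=\varphi^{Sh}_j(v)$. The null player property holds because every term $v(S)-v(S\setminus i)$ vanishes when $i$ is null.

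For uniqueness, let $\varphi$ satisfy the four axioms, and first compute $\varphi$ on scaled unanimity games $c\,u_T$ with $c\in\mathbb{R}$ and $T\neq\emptyset$.

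\emph{Null players.} Any $i\notin T$ is null in $c\,u_T$, since $u_T(S\cup i)=u_T(S)$ whenever $i\notin T$. Hence $\varphi_i(c\,u_T)=0$.

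\emph{Symmetric players.} Any two $i,j\in T$ are symmetric. For $S\subseteq N\setminus\{i,j\}$, neither $S\cup i$ nor $S\cup j$ contains $T$, so both have worth $0$. Hence $\varphi_i(c\,u_T)=\varphi_j(c\,u_T)$.

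\emph{Efficiency.} The payoffs then sum to $c\,u_T(N)=c$, which forces $\varphi_i(c\,u_T)=c/|T|$ for $i\in T$.

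To extend to all games, write an arbitrary $v$ uniquely as $v=\sum_{T\neq\emptyset}\lambda_T u_T$, using the unanimity basis recalled in Section~\ref{prel}. Linearity then gives
$$
\varphi_i(v)=\sum_{T\ni i}\frac{\lambda_T}{|T|}.
$$
The right-hand side does not depend on which $\varphi$ we started with. So at most one solution satisfies the axioms, and since $\varphi^{Sh}$ does, $\varphi\equiv\varphi^{Sh}$.

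No step is a real obstacle. The only points to check carefully are the symmetry and null-player facts for $u_T$ and the efficiency verification of the formula. Linearity (even additivity alone) is exactly what licenses the extension from the basis to every game. For this reason we do not need to handle $\lambda_T$ of either sign separately: the scaled games $c\,u_T$ already cover every coefficient.
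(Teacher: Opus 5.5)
Your proof is correct and is the standard unanimity-basis argument. The paper only cites this result, but its necessity proof for Theorem~\ref{thm1} follows the same lines: pin down $\varphi(u_T)$ via null players, symmetry and efficiency, then extend by linearity. Your remark that additivity suffices is also right, because the null player property fixes $\varphi(c\,u_T)$ for every scalar $c$ directly; this is exactly what fails under null player neutrality, which is why Theorem~\ref{thm1} needs full linearity (cf.\ Remark~\ref{hammel}).
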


We now present our main characterization. In essence, we explore the implications of replacing the null player property with null player neutrality in the preceding result. Our first theorem identifies the class of solutions that satisfy efficiency, linearity, symmetry, and null player neutrality. We show that this class coincides precisely with the set of linear combinations of the equal division and Shapley solutions.

\begin{theorem}
\label{thm1}
A solution $\varphi$ satisfies efficiency, linearity, symmetry, and null player neutrality if and only if there exists $\alpha \in \mathbb{R}$ such that
$$
\varphi \equiv \alpha \varphi^{ED} + (1-\alpha) \varphi^{Sh}.
$$
\end{theorem}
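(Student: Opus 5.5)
The plan is to prove the two directions separately. Sufficiency is a linearity check. Necessity evaluates $\varphi$ on the unanimity basis and uses linearity to extend to all games.

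\emph{Sufficiency.} Both $\varphi^{ED}$ and $\varphi^{Sh}$ satisfy efficiency, linearity, symmetry and null player neutrality, as observed earlier in this section. For any $\alpha\in\mathbb{R}$, the combination $\alpha\varphi^{ED}+(1-\alpha)\varphi^{Sh}$ is linear. It is efficient because the coefficients sum to $1$, and it is symmetric because each summand is. For null player neutrality, note that under additivity the condition $\varphi_i(v+w)=\varphi_i(v+u)$ is equivalent to $\varphi_i(w)=\varphi_i(u)$. This is a linear condition in $\varphi$, so it is preserved under arbitrary real linear combinations.

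\emph{Necessity.} Let $\varphi$ satisfy the four axioms and fix $i\in N$.
\begin{enumerate}
\item Let $\mathcal{W}_i=\{w\in\V^N: i \text{ is null in } w\}$, which is a linear subspace. By additivity, null player neutrality says that $\varphi_i(w)=\varphi_i(u)$ whenever $w,u\in\mathcal{W}_i$ and $w(N)=u(N)$.
\item Hence the linear functional $\varphi_i|_{\mathcal{W}_i}$ vanishes on the kernel of the functional $w\mapsto w(N)$ restricted to $\mathcal{W}_i$. By the factor lemma for linear functionals, $\varphi_i(w)=c_i\,w(N)$ for all $w\in\mathcal{W}_i$, for some constant $c_i\in\mathbb{R}$. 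Concretely, $c_i=\varphi_i(u_T)$ for any nonempty $T\subseteq N\setminus\{i\}$, since $u_T\in\mathcal{W}_i$ and $u_T(N)=1$.
\item The constants are equal across players. For distinct $i,j$, the assumption $n\ge 3$ lets us pick a nonempty $T$ avoiding both $i$ and $j$. Then $i$ and $j$ are symmetric in $u_T$, so symmetry gives $c_i=\varphi_i(u_T)=\varphi_j(u_T)=c_j$. Write $c$ for the common value and set $\alpha=nc$.
\end{enumerate}

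\emph{Computing $\varphi(u_T)$ and comparing.} Fix nonempty $T$ with $t=|T|$.
\begin{itemize}
\item Every $i\notin T$ receives $c$.
\item All players in $T$ are symmetric in $u_T$, so they receive a common amount. Efficiency then gives $\varphi_i(u_T)=\frac{1-(n-t)c}{t}$ for $i\in T$. When $T=N$ this reads $1/n$, which is consistent.
\end{itemize}
On the other side, $\alpha\varphi^{ED}+(1-\alpha)\varphi^{Sh}$ gives $\alpha/n=c$ to $i\notin T$. To $i\in T$ it gives
\[
c+\frac{1-nc}{t}=\frac{1-(n-t)c}{t}.
\]
So the two solutions agree on every unanimity game. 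Since the $u_T$ form a basis of $\V^N$, linearity yields $\varphi\equiv\alpha\varphi^{ED}+(1-\alpha)\varphi^{Sh}$.

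\emph{Main obstacle.} The delicate step is step 2, turning null player neutrality into the statement that $\varphi_i$ is proportional to $w(N)$ on $\mathcal{W}_i$. This needs additivity to reduce the axiom to a statement about $\varphi_i(w)$ alone. It also needs $\mathcal{W}_i$ to be a linear subspace, so that the factor lemma applies and gives a well-defined constant $c_i$.
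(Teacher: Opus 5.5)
Your proof is correct and follows essentially the same route as the paper: you decompose into unanimity games, show that every null player in every $u_T$ receives a common constant $c$ (the paper's $\eta$), use symmetry and efficiency to pin down the payoffs of the members of $T$, and set $\alpha=nc$. The only difference is organizational: you first fix the player and use null player neutrality with linearity on the subspace $\mathcal{W}_i$, then compare players $i\neq j$ inside a unanimity game $u_T$ that avoids both, which removes the need for the paper's case split on whether $T\cup T'=N$.
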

\begin{proof}
We first verify that any solution of the stated form satisfies all four properties.
Efficiency, linearity, and symmetry hold because both $\varphi^{Sh}$ and $\varphi^{ED}$
satisfy them and are preserved under linear combinations. To verify null
player neutrality, let $v,w,u\in\V^N$ and suppose player $i$ is null in both
$w$ and $u$ with $w(N)=u(N)$. Since $i$ is null in $w$ and in $u$,
$\varphi^{Sh}_i(w)=\varphi^{Sh}_i(u)=0$. Moreover, since $w(N)=u(N)$,
$\varphi^{ED}_i(w)=\varphi^{ED}_i(u)$. Then,
\begin{align*}
  \varphi_i(v+u)
    &= \alpha \varphi_i^{ED}(v+u) + (1-\alpha) \varphi_i^{Sh}(v+u) \\
    &= \alpha \bigl(\varphi_i^{ED}(v)+\varphi_i^{ED}(u)\bigr)
       + (1-\alpha) \bigl(\varphi_i^{Sh}(v)+\varphi_i^{Sh}(u)\bigr)\\
    &= \alpha \bigl(\varphi_i^{ED}(v)+\varphi_i^{ED}(u)\bigr)
       + (1-\alpha)\,\varphi_i^{Sh}(v)\\
    &= \alpha \bigl(\varphi_i^{ED}(v)+\varphi_i^{ED}(w)\bigr)
       + (1-\alpha)\,\varphi_i^{Sh}(v)\\
    &= \alpha \bigl(\varphi_i^{ED}(v)+\varphi_i^{ED}(w)\bigr)
       + (1-\alpha) \bigl(\varphi_i^{Sh}(v)+\varphi_i^{Sh}(w)\bigr)\\
    &= \varphi_i(v+w).
\end{align*}
 
We now prove necessity. Let $\varphi$ be a solution satisfying the four axioms,
and let $v \in \V^N$. Since the unanimity games $\{u_T\}_{T \subseteq N,\,T\neq\emptyset}$
form a basis of $\V^N$, there exist coefficients $(\lambda_T)_{T \subseteq N,\,T\neq\emptyset}$
such that $v = \sum_{T \subseteq N,\,T\neq\emptyset} \lambda_T\, u_T$. By \emph{linearity},
$$
\varphi(v) = \sum_{T \subseteq N,\,T\neq\emptyset} \lambda_T\,\varphi(u_T).
$$
 
We first show that, for any nonempty subsets $T,T' \subseteq N$, any $i \notin T$, and any $j \notin T'$, we have
\[
\varphi_i(u_T) = \varphi_j(u_{T'}).
\]

Note that $i \in N \setminus T$ if and only
if $i$ is null in $u_T$. We distinguish two cases.
 
\begin{itemize}
  \item[(i)] $T \cup T' \subsetneq N$. Then there exists $l \in N$ with $l \notin T$
  and $l \notin T'$, i.e., $l \in \NP(u_T) \cap \NP(u_{T'})$. Since $u_T(N)=u_{T'}(N)=1$, by \emph{null player neutrality}, $\varphi_l(w + u_T) = \varphi_l(w + u_{T'})$ for
  any $w \in \V^N$. Applying \emph{linearity} to both sides and cancelling
  $\varphi_l(w)$ yields $\varphi_l(u_T) = \varphi_l(u_{T'})$. Since $i$ and $l$ are
  both null in $u_T$, and $j$ and $l$ are both null in $u_{T'}$, \emph{symmetry}
  gives $\varphi_i(u_T) = \varphi_l(u_T) = \varphi_l(u_{T'}) = \varphi_j(u_{T'})$.
 
  \item[(ii)] $T \cup T' = N$. Since $T \cup T' = N$ and $|N| \geq 3$, at least one
of the two coalitions contains at least two players. Without loss of generality,
assume that $|T| \geq 2$. Moreover, since $j \notin T'$ and $T \cup T'=N$, we have
$j \in T$. Hence, there exists $k \in T \setminus \{j\}$. Set $\hat{T}=\{k\}$.
Then $\hat{T}$ is nonempty, $\hat{T}\subseteq T$, and $j\notin \hat{T}$.  Since $i \notin T$, player $i$ is null in both $u_T$
  and $u_{\hat{T}}$ and $u_T(N)=u_{\hat{T}}(N)=1$, so by \emph{null player
  neutrality} and \emph{linearity}, $\varphi_i(u_T) = \varphi_i(u_{\hat{T}})$.
  Analogously, since $j \notin T'$ and $j \notin \hat{T}$, we obtain
  $\varphi_j(u_{T'}) = \varphi_j(u_{\hat{T}})$. Since both $i$ and $j$ are null in
  $u_{\hat{T}}$, by \emph{symmetry}
  $\varphi_i(u_{\hat{T}}) = \varphi_j(u_{\hat{T}})$. Combining yields
  $\varphi_i(u_T) = \varphi_j(u_{T'})$.
\end{itemize}
 
Now let $T \subseteq N$, with $T\neq \emptyset$, and consider the unanimity game $u_T$. If
$\{i,j\} \subseteq T$ or $\{i,j\} \subseteq N \setminus T$, then $i$ and $j$ are
symmetric in $u_T$. Consequently, the players of $u_T$ split into two groups: $T$,
whose members are non-null and pairwise symmetric, and $N \setminus T$, whose members
are null and pairwise symmetric. By the argument proved above, the common payoff of
null players does not depend on $T$; call it $\eta \in \mathbb{R}$. That is,
$\varphi_i(u_T) = \eta$ for every $i \in N \setminus T$. By \emph{efficiency},
$\sum_{k \in N} \varphi_k(u_T) = u_T(N) = 1$, and by \emph{symmetry} all players in
$T$ receive the same payoff, so $\varphi_i(u_T) = \frac{1-(n-t)\eta}{t}$ for every
$i \in T$.
 
Set $\alpha := n\eta$. We verify that
$\varphi_i(u_T) = \alpha\,\varphi_i^{ED}(u_T) + (1-\alpha)\,\varphi_i^{Sh}(u_T)$
in each case. Recall that, for unanimity games $u_T$, the Shapley value and the equal division value are given by
\[
\varphi_i^{Sh}(u_T)=
\begin{cases}
\dfrac{1}{t}, & \text{if } i\in T\\
0, & \text{if } i\notin T,
\end{cases}
\qquad
\varphi_i^{ED}(u_T)=\frac{1}{n}
\quad \text{for all } i\in N.
\]

Therefore,
\begin{itemize}
  \item[(i)] If $i \in N \setminus T$ ($i$ is null in $u_T$):
  $$
  \alpha\,\varphi_i^{ED}(u_T) + (1-\alpha)\,\varphi_i^{Sh}(u_T)
  = \alpha\frac{1}{n} + (1-\alpha) 0
  = n\eta\frac{1}{n} = \eta = \varphi_i(u_T).
  $$
  \item[(ii)] If $i \in T$ ($i$ is non-null in $u_T$):
  $$
  \alpha\,\varphi_i^{ED}(u_T) + (1-\alpha)\,\varphi_i^{Sh}(u_T)
  = \alpha\frac{1}{n} + (1-\alpha)\frac{1}{t}
  = \eta + \left(1-n\eta\right)\frac{1}{t}
  = \frac{1-(n-t)\eta}{t}
  = \varphi_i(u_T).
  $$
\end{itemize}
 
Finally, by \emph{linearity},
$$
\varphi(v) = \alpha\,\varphi^{ED}(v) + (1-\alpha)\,\varphi^{Sh}(v). 
$$
\end{proof}

Theorem~\ref{thm1} can be understood as a direct generalization of Shapley's classical characterization. As highlighted above, under linearity, null player neutrality constitutes a weakening of the null player property. This result shows that relaxing the latter condition, while preserving the remaining structural axioms, enlarges the set of admissible solutions from a single point to a one-parameter family. At the same time, Theorem~\ref{thm1} illustrates the normative role of the null player property within this family as it is precisely this requirement that forces the parameter~$\alpha$ to collapse to zero.

The $\alpha$-egalitarian Shapley values, introduced by \citet{Joosten96} and axiomatized by \citet{Brink13}, constitute the subfamily of our family corresponding to $\alpha \in [0,1]$. \citet{Brink13} show that a solution satisfies \emph{efficiency}, \emph{linearity}, \emph{anonymity}, and \emph{weak monotonicity} if and only if it is an $\alpha$-egalitarian Shapley value for some $\alpha\in[0,1]$.
\footnote{Anonymity. For each $v \in \V^N$ and each $i \in N$, $\varphi_i(v) = \varphi_{\pi(i)}(\pi v)$, where $\pi$ is a permutation on $N$ and $\pi v$ is defined by $\pi v\left(\bigcup_{i \in S}\{\pi(i)\}\right)=v(S)$ for all $S \subseteq N$. Weak monotonicity: For each $v,w \in \V^N$ such that $v(N) \geq w(N)$ and each $i \in N$, if $v(S \cup i)-v(S) \geq w(S \cup i)-w(S)$ for all $S \subseteq N \backslash\{i\}$, then  $\varphi_i(v) \geq \varphi_i(w)$.} 
It is worth noting that if symmetry is replaced by anonymity in the statement of Theorem~\ref{thm1}, the result still holds. Consequently, the substantive difference between the two characterizations lies entirely in the fourth axiom. Weak monotonicity imposes restrictions on how payoffs may co-vary with productivity, implicitly embedding a preference for solutions that reward higher marginal contributions with higher payoffs. Null player neutrality, by contrast, involves no such cardinal comparison; it only requires that the payoff assigned to a null player be invariant to the specific null player game used for augmentation, provided that the worth of the grand coalition is held fixed. This distinction clarifies the normative gap between the two axioms and highlights the absence of any logical implication between weak monotonicity and null player neutrality. Nevertheless, Theorem~\ref{thm1} shows that replacing weak monotonicity with null-player neutrality in the characterization of \citet{Brink13} strictly enlarges the set of admissible solutions.

Theorem \ref{thm1} and Proposition 1 of \citet{Casajus13} share  the same structural backbone (efficiency, linearity, and symmetry), but differ both in the fourth axiom and in the family of rules they characterize. \citet{Casajus13} introduce the \emph{null player in a productive environment} property, which requires that a null player receive a non-negative payoff whenever the grand coalition's worth is non-negative.\footnote{Null player in a productive environment. For each $v \in \V^N$ and each $i \in N$, if $v(N) \geq 0$ and $i$ is null in $v$, then $\varphi_i(v) \geq 0$.} Under efficiency, linearity, and symmetry, this axiom pins down the family $\alpha \varphi^{ED} + (1-\alpha)\varphi^{Sh}$ with $\alpha \geq 0$. By contrast, Theorem \ref{thm1} shows that replacing null player in a productive environment by null player neutrality yields the strictly larger family $\alpha \varphi^{ED} + (1-\alpha)\varphi^{Sh}$ with $\alpha \in \mathbb{R}$, imposing no restriction on the sign of $\alpha$.

This relationship is not surprising in light of the following proposition.

\begin{proposition}\label{prop_lineanull}
If a linear solution satisfies null player in a productive environment, then it also satisfies null player neutrality. The converse does not hold. 
\begin{proof}
Let us see the implication. Let $v, w, u \in \V^N$ with $w(N) = u(N)$, and let $i \in N$ be a null player in both $w$ and $u$. By \emph{linearity}, $\varphi_i(v+w) = \varphi_i(v) + \varphi_i(w)$ and $\varphi_i(v+u) = \varphi_i(v) + \varphi_i(u)$, so it suffices to show that $\varphi_i(w-u) = 0$. Since $i$ is null in $w - u$ (as this player is null in both $w$ and $u$) and $(w-u)(N) = w(N) - u(N) = 0 \geq 0$, \emph{null player in a productive environment} yields $\varphi_i(w-u) \geq 0$. Applying the same argument to $u - w$ gives $\varphi_i(u-w) \geq 0$. Since $w - u = -(u-w)$, by \emph{linearity}, we have that $\varphi_i(w-u)=\varphi_i(-(u-w))=-\varphi_i(u-w)\leq 0$. Therefore $\varphi_i(w-u) = 0$, and \emph{null player neutrality} follows. 

The converse fails: any solution $\alpha\varphi^{ED} + (1-\alpha)\varphi^{Sh}$ with $\alpha < 0$ satisfies null player neutrality but assigns a strictly negative payoff to any null player in a game with $v(N) > 0$, violating null player in a productive environment.
\end{proof}
\end{proposition}

As the following remark shows, the relationship between the null player in a productive environment and null player neutrality established in the previous proposition breaks down in the absence of linearity.

\begin{remark}
Null player in a productive environment and null player neutrality are independent.
\begin{itemize}
   \item[(i)] The solution $\varphi$ with $\alpha=-1$ in Theorem \ref{thm1} satisfies null player neutrality but violates null player in a productive environment.
   \item[(ii)] The solution $\varphi$ such that, for each $v \in \V^N$ and each $i \in N$, $\varphi_i(v)=\max\{v(1),0\}$ clearly satisfies null player in a productive environment, but violates null player neutrality. Indeed, let $v,w,u \in \V^N$, such that $v \equiv w \equiv \mathbf{0}$, and $u(1)=1$, $u(12)=1$ and $u(S)=0$ otherwise. Player 2 is null both in $w$ and in $u$, and besides $w(N)=u(N)$. Nevertheless,
   $$
   \varphi_2(v+w)=\varphi_2(\mathbf{0})=0, \quad \text{but} \quad \varphi_2(v+u)=\varphi_2(u)=\max\{u(1),0\}=1,
   $$
\end{itemize}
\end{remark}

A further comparison connects Theorem \ref{thm1} with the characterization of the Shapley solution due to \citet{Chun91}, who shows that the Shapley solution is the unique solution satisfying efficiency, symmetry, and coalitional strategic equivalence, with no appeal to linearity. Theorem \ref{thm1} modifies this result in two respects: it replaces coalitional strategic equivalence with the strictly weaker null player neutrality, and it adds linearity as an explicit requirement. Despite the additional structure imposed by linearity, the replacement of the axiom concerning the null players enlarges the set of admissible solutions, from the Shapley solution alone to the full one-parameter family $\{\alpha\varphi^{ED}+(1-\alpha)\varphi^{Sh}:\alpha\in\mathbb{R}\}$. That is, weakening coalitional strategic equivalence to null player neutrality expands the characterized family, even in the presence of linearity.

All axioms in Theorem \ref{thm1} are necessary for the characterization.

\begin{remark}\label{independence_thm1}
Independence of axioms.
\begin{itemize}
  \item The null solution, $\varphi^0(v)=0$ for any $v \in \V^N$, satisfies linearity, symmetry and null player neutrality, but violates efficiency. 
  \item The solution $\varphi$ defined in Remark \ref{hammel} satisfies efficiency, symmetry and null player neutrality, but violates linearity.
  \item  The solution $\varphi(v)=(0,\frac{v(N)}{n-1},\cdots,\frac{v(N)}{n-1})$  satisfies efficiency, linearity and null player neutrality, but violates symmetry.
  \item The equal surplus division solution $\varphi^{ESD}$ defined as follows satisfies efficiency, symmetry and linearity, but violates null player neutrality. For each $v \in \V^N$ and each $i \in N$,
  $$
  \varphi^{ESD}_i(v)=v(i)+\frac{v(N)-\sum_{j\in N}v(j)}{n}
  $$
\end{itemize}
\end{remark}

Finally, we notice that linearity cannot be relaxed to mere additivity in our characterization. The following remark exhibits a solution that satisfies efficiency, additivity, symmetry, and null player neutrality, yet falls outside the family characterized in Theorem \ref{thm1}.

\begin{remark}\label{hammel}
Let $\mathcal{B} = (b_k)_{k \in K}$ be a Hamel basis of $\mathbb{R}$ over  $\mathbb{Q}$ with $b_1 = 1$ and $b_2 = \sqrt{2}$. Define $f: \mathbb{R} \longrightarrow \mathbb{R}$ by setting
$$
f(x) = 
    \begin{cases}
    \eta & \text{if } x \in \{1,\sqrt{2}\}, \\
    x   & \text{if } x \in \mathcal{B} \setminus \{1,\sqrt{2}\},
    \end{cases}
$$
for some $\eta > 0$, and extending to all $x \in \mathbb{R}$ by $\mathbb{Q}$-linearity,  if $x = \sum_{k \in K} \alpha_k b_k$, set $f(x) = \sum_{k \in K} \alpha_k f(b_k)$. By construction, $f$ is additive: for any $x = \sum_{k}\alpha_k b_k$ and  $y = \sum_{k}\beta_k b_k$,
\begin{align*}
    f(x + y) 
    = f\!\left(\sum_{k}(\alpha_k + \beta_k)\,b_k\right)
    = \sum_{k}(\alpha_k + \beta_k)\,f(b_k)
    = f(x) + f(y).
\end{align*}
However, $f$ is not linear over $\mathbb{R}$: indeed, $f(\sqrt{2}) = \eta \neq \sqrt{2}\,\eta = \sqrt{2}\,f(1)$ whenever $\eta \neq 0$. 

Using $f$, we define a solution $\varphi$ for TU-games as follows. For each  $T \subseteq N$ and scalar $\lambda_T \in \mathbb{R}$, set
$$
\bar{\varphi}_i(N,\lambda_T u_T) =
\begin{cases}
  f(\lambda_T) & \text{if } i \notin T, \\[6pt]
  \dfrac{\lambda_T - (n-|T|)\,f(\lambda_T)}{|T|} & \text{if } i \in T.
\end{cases}
$$
and extend to arbitrary games by additivity: if $v = \sum_{T \subseteq N}  \lambda_T u_T$ is the unanimity decomposition of $v$, define
$$
\varphi(N,v) = \sum_{T \subseteq N} \bar{\varphi}(N,\lambda_T u_T).
$$
It can be checked that $\varphi$ satisfies efficiency, symmetry and additivity. Let us see that $\varphi$ satisfies null player neutrality. Consider $v,w,u$, and let $i\in N$ be such that $i$ is null in both $w$ and $u$, with $w(N)=u(N)$. Since $i$ is null in $w$, its unanimity decomposition only involves coalitions not containing $i$, that is,
$$
w=\sum_{T\subseteq N: i\notin T}\lambda_T u_T.
$$
Hence, by construction,
$$
\varphi_i(w)=\sum_{T\subseteq N: i\notin T} f(\lambda_T).
$$
Since $f$ is additive, this can be rewritten as
$$
\varphi_i(w)=f\left(\sum_{T\subseteq N: i\notin T}\lambda_T\right).
$$
Now, for every unanimity game $u_T$ we have $u_T(N)=1$, so evaluating $w$ at the grand coalition gives
$$
w(N)=\sum_{T\subseteq N: i\notin T}\lambda_Tu_T(N)=\sum_{T\subseteq N: i\notin T}\lambda_T.
$$
Therefore,
$$
\varphi_i(w)=f(w(N)).
$$
Exactly the same argument applied to $u$ yields
$$
\varphi_i(u)=f(u(N)).
$$
Since $w(N)=u(N)$, we conclude that
$$
\varphi_i(w)=\varphi_i(u).
$$
Finally, by additivity of $\varphi$,
$$
\varphi_i(v+w)=\varphi_i(v)+\varphi_i(w)
=\varphi_i(v)+\varphi_i(u)
=\varphi_i(v+u).
$$
Thus, $\varphi$ satisfies null player neutrality. It is not linear, however, and it does not belong  to the family $\alpha\varphi^{ED} + (1-\alpha)\varphi^{Sh}$ characterized in 
Theorem~\ref{thm1}.
\end{remark}




\section{Further insights}\label{sec_further}

\citet{Brink07} proposes the notion of \emph{nullifying player} as an alternative to null player. A player $i \in N$ is considered \emph{nullifying} if participation in any coalition results in that coalition generating zero value, i.e., $v(S \cup i) = 0$ for all $S \subseteq N \setminus \{i\}$. We denote by $NfP(v)$ the set of nullifying players in $v$. Analogously to the null player property, the \emph{nullifying player property} simply states that the payoff of a nullifying player must be zero.

\textbf{Nullifying player property.} For each $v \in \V^N$ and each $i\in N$, if $i$ is nullifying then $\varphi_i(v)=0$.

Extending the parallelism between null and nullifying players, \citet{Brink07} also introduces the axiom of \emph{coalitional standard equivalence}, which mirrors coalitional strategic equivalence by replacing null players with nullifying players.

\textbf{Coalitional standard equivalence.} For each $v,w \in \V^N$ and each $i\in N$, if $i\in N$ is nullifying in $w$, then $\varphi_i(v+w)=\varphi_i(v)$.

Following the parallelism developed in Section \ref{NPN}, we define the axiom of \emph{nullifying player neutrality} by requiring that, for any two games in which player $i$ is nullifying and that have the same grand-coalition value, augmenting by either game produces the same payoff for $i$.

\noindent\textbf{Nullifying player neutrality.} For each $v,w,u \in \V^N$ and each $i\in N$, if $w(N)=u(N)$ and $i$ is nullifying in $w$ and $u$, then $\varphi_i(v+w)=\varphi_i(v+u)$.

A fundamental structural difference with the null case emerges immediately. While null player neutrality is weaker than coalitional strategic equivalence, nullifying player neutrality turns out to be equivalent to coalitional standard equivalence.

\begin{proposition}\label{equiv_nullifying}
Coalitional standard equivalence and nullifying player neutrality are equivalent.
\begin{proof}
Coalitional standard equivalence clearly implies nullifying player neutrality. For the converse, consider any $v,w\in\V^N$ such that $i\in N$ is nullifying in $w$, and let $\mathbf{0}$ denote the null game (the characteristic function identically equal to zero). Player $i$ is nullifying in $\mathbf{0}$ as well, and $w(N)=0=\mathbf{0}(N)$. By \emph{nullifying player neutrality} applied to the triple $(v,w,\mathbf{0})$, we obtain
$$
\varphi_i(v+w)=\varphi_i(v+\mathbf{0})=\varphi_i(v).
$$
Hence $\varphi$ satisfies coalitional standard equivalence.
\end{proof}
\end{proposition}

The key insight behind Proposition \ref{equiv_nullifying} is that the zero game $\mathbf{0}$ is always available as a \emph{nullifying-player game} for any player, with grand-coalition value zero. Consequently, fixing the grand-coalition value and requiring payoff invariance across nullifying-player augmentations already forces the augmentation itself to be irrelevant, recovering the full strength of coalitional standard equivalence. In the null-player case, by contrast, the zero game is of course null for every player, but it has grand-coalition worth equal to zero. Hence it cannot play the same role as in the nullifying-player case: null player neutrality allows a game $w$ in which player $i$ is null to be compared with the zero game only when $w(N)=0$. If player $i$ is null in a game $w$ with $w(N)=c\neq 0$, the axiom only compares $w$ with other games in which $i$ is null and whose grand-coalition worth is also $c$; it does not imply that adding $w$ leaves player $i$'s payoff unchanged. This asymmetry is precisely what makes null player neutrality a genuine weakening of coalitional strategic equivalence, whereas nullifying player neutrality collapses to coalitional standard equivalence.

Recall from Section \ref{NPN} that the null player property and null player neutrality are logically independent, and that the bridge between them requires linearity. For nullifying players, the analogous relationship is again different. As shown in Remark \ref{remark_NfPPandNfPN} below, the nullifying player property and nullifying player neutrality remain independent in general.

\begin{remark}\label{remark_NfPPandNfPN}
The nullifying player property and nullifying player neutrality are independent.
\begin{itemize}
\item[(i)] Consider the solution $\varphi^2$ defined as follows. For each $v \in \V^N$ and each $i\in N$,
$$
\varphi^2_i(v)=\begin{cases}
0 & \text{if }i\in \NfP(v)\text{ and }|\NfP(v)|\neq 0,\\[0.2cm]
\dfrac{v(N)}{n-|\NfP(v)|} &
\text{if }i\notin \NfP(v)\text{ and }|\NfP(v)|\neq 0,\\[0.2cm]
\varphi^{ED}_i(v) & \text{if }v(S)=v(R)\;\forall S,R (\neq \emptyset) \subseteq N,\\[0.2cm]
\varphi^{Sh}_i(v) & \text{otherwise,}
\end{cases}
$$
The solution $\varphi^2$ satisfies the nullifying player property but violates nullifying player neutrality. To see this, consider the games $v,w,u$ with $N=\{1,2,3\}$ and
\begin{center}
\begin{tabular}{cccc}
  \toprule
  $S$ & $v(S)$ & $w(S)$ & $u(S)$ \\
  \midrule
  $\varnothing$ & 0 & 0 & 0 \\
  $\{1\}$ & 1 & 0 & 0 \\
  $\{2\}$ & 1 & 0 & 1 \\
  $\{3\}$ & 1 & 0 & 1 \\
  $\{1,2\}$ & 1 & 0 & 0 \\
  $\{1,3\}$ & 1 & 0 & 0 \\
  $\{2,3\}$ & 1 & 0 & 1 \\
  $\{1,2,3\}$ & 1 & 0 & 0 \\
  \bottomrule
\end{tabular}
\end{center}
Player $1$ is nullifying in $w$ and $u$, and $w(N)=u(N)=0$. However, $\varphi^2_1(v+u)=\varphi^{Sh}_1(v+u)\neq\varphi^{ED}_1(v+w)=\varphi^2_1(v+w)$, violating nullifying player neutrality.

\item[(ii)] The solution $\varphi_i(v)=v(i)+a$ with $a\neq 0$ satisfies nullifying player neutrality but violates the nullifying player property.
\end{itemize}
\end{remark}

Although the nullifying player property and nullifying player neutrality are independent in general, under linearity they are equivalent, in contrast with the null case, where linearity yields only the one-way implication from the null player property to null player neutrality.

\begin{proposition}\label{prop_additivity_nullifying}
A linear solution satisfies the nullifying player property if and only if it satisfies nullifying player neutrality.
\begin{proof}
Linearity and the nullifying player property trivially imply nullifying player neutrality. For the converse, let $\varphi$ be a linear solution satisfying nullifying player neutrality. First, notice that \emph{linearity} implies $\varphi_i(\mathbf{0})=0$ for any $i\in N$. Now let $v\in\V^N$ and $i\in N$ nullifying in $v$. Since $v(N)=0=\mathbf{0}(N)$ (because $i$ is nullifying) and $i$ is nullifying in both $v$ and $\mathbf{0}$, \emph{nullifying player neutrality} imposes that $\varphi_i(\mathbf{0}+v)=\varphi_i(\mathbf{0}+\mathbf{0})$, so $\varphi_i(v)=\varphi_i(\mathbf{0})=0$.
\end{proof}
\end{proposition}

We are now ready to state the counterpart of Theorem \ref{thm1} for nullifying players. The result reveals a striking contrast with the null case. Whereas null player neutrality expands the admissible family to an entire one-parameter class, nullifying player neutrality uniquely pins down the equal division solution. 

Our characterization builds directly on a result due to \citet{Brink07}, which we reproduce here for ease of reference.

\begin{theorem}[\citealt{Brink07}]\label{thm_EDCSE}
A solution satisfies efficiency, symmetry, and coalitional standard equivalence if and only if it is the equal division solution, $\varphi\equiv\varphi^{ED}$.
\end{theorem}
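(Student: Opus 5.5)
The plan is to prove this without linearity, using only efficiency, symmetry, and coalitional standard equivalence. The idea is to decompose an arbitrary game into games in which some chosen player is nullifying, and to use the canonical basis $\{e_T\}$ rather than unanimity games.

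\emph{Sufficiency.} We check that $\varphi^{ED}$ has the three properties. Efficiency and symmetry are immediate. If $i$ is nullifying in $w$, then $w(N)=w((N\setminus i)\cup i)=0$. Hence
$$
\varphi^{ED}_i(v+w)=\frac{v(N)+w(N)}{n}=\frac{v(N)}{n}=\varphi^{ED}_i(v).
$$

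\emph{Necessity.} Write $v=\sum_{\emptyset\neq T\subseteq N}\gamma_T e_T$. A player $i$ is nullifying in $\gamma_T e_T$ exactly when $i\in T$ and $T\neq N$; if $i\notin T$, then $i$ is not nullifying unless $\gamma_T=0$. The idea is to strip off these terms one at a time without changing $\varphi_i$.

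Fix $i$. Let $w^i=\sum_{T\ni i,\,T\neq N}\gamma_T e_T$. Every coalition $S\ni i$ with $S\neq N$ gets its own coefficient, and $w^i(N)=0$, so $i$ is nullifying in $w^i$. By coalitional standard equivalence,
$$
\varphi_i(v)=\varphi_i(v-w^i),
$$
where $v-w^i=\gamma_N e_N+\sum_{T\not\ni i}\gamma_T e_T$.

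This alone does not pin the value down, so we use a cleaner route: induct on the number of coalitions $T\neq N$ with $\gamma_T\neq 0$. The base case is $v=\gamma_N e_N$. All players are symmetric in it, so efficiency and symmetry give $\varphi_i=\gamma_N/n$. For the inductive step, pick a $T\neq N$ with $\gamma_T\neq 0$. Every $j\in T$ is nullifying in $\gamma_T e_T$, so
$$
\varphi_j(v)=\varphi_j(v-\gamma_T e_T)=\frac{v(N)}{n}\quad\text{for } j\in T,
$$
by induction, since removing $\gamma_T e_T$ leaves $v(N)$ unchanged. This handles the players in $T$.

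The players outside $T$ are the main obstacle. If some other support coalition $T'\neq N$ contains $j$, the same argument applies with $T'$. What remains is the set of players $j$ lying in no support coalition other than $N$. For these we use efficiency: every other player already receives $v(N)/n$, so the leftover players share exactly $|R|\,v(N)/n$, where $R$ is the leftover set.

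To split this amount equally we need symmetry among the players of $R$. Take $j,k\in R$ and any $S\subseteq N\setminus\{j,k\}$. Both $S\cup j$ and $S\cup k$ are proper subsets of $N$ (or $N$ itself) that contain a player from $R$, so neither is a support coalition. Hence $v(S\cup j)=v(S\cup k)$, which equals $\gamma_N$ if the set is $N$ and $0$ otherwise. So all players in $R$ are symmetric, and each receives $v(N)/n$.

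I expect the symmetry check for $R$ to be the step needing the most care. It works because $|N|\geq 3$ is not even needed; the only requirement is that the support coalitions other than $N$ avoid $R$. This avoids linearity entirely.
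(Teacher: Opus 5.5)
\textbf{The nullifying-player fact is reversed, so the inductive step fails.} The paper does not prove this theorem; it quotes it from van den Brink (2007). Your proof therefore has to stand on its own.

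A player $i$ is nullifying in $\gamma_T e_T$ if and only if $i\notin T$ (or $\gamma_T=0$). Nullifying means every coalition containing $i$ has worth zero, and the only coalition with nonzero worth in $\gamma_T e_T$ is $T$. For $j\in T$ we have $(\gamma_T e_T)(T)=\gamma_T\neq 0$, so $j$ is \emph{not} nullifying.

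Consequently your step $\varphi_j(v)=\varphi_j(v-\gamma_T e_T)$ for $j\in T$ is not licensed by coalitional standard equivalence. The solution $\varphi_i(v)=v(i)$ satisfies coalitional standard equivalence, yet violates this identity for $T=\{j\}$ whenever $v(j)\neq 0$. For the same reason, $i$ is not nullifying in your $w^i=\sum_{T\ni i,\,T\neq N}\gamma_T e_T$, since $w^i(T)=\gamma_T$ for each such $T$. Your residual set $R$ (players lying in no support coalition other than $N$) rests on this invalid step, so the inductive step as written establishes nothing.

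\textbf{The structure is right; swap the roles throughout.} If $T\neq N$ and $\gamma_T\neq 0$, then every $j\notin T$ is nullifying in $\gamma_T e_T$. So $\varphi_j(v)=\varphi_j(v-\gamma_T e_T)=v(N)/n$ by the induction hypothesis, because $v-\gamma_T e_T$ has one fewer support coalition other than $N$ and the same worth at $N$. Running over all such $T$ settles every player outside $R:=\bigcap\{T\subsetneq N:\gamma_T\neq 0\}$. The residual set is the \emph{intersection} of the support coalitions, not the set of players avoiding all of them.

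The symmetry check must then be redone for this $R$. For $j,k\in R$ and $S\subseteq N\setminus\{j,k\}$, the coalition $S\cup j$ omits $k\in R$, so it is neither $N$ nor a support coalition; hence $v(S\cup j)=0$, and likewise $v(S\cup k)=0$. Efficiency leaves $|R|\,v(N)/n$ for $R$, and symmetry splits it equally.

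With this correction the argument is complete and uses neither linearity nor $|N|\geq 3$. It is the standard induction on the support used by Young and Chun for the Shapley value, transplanted to the canonical basis. Note also that $S\cup j$ can never equal $N$, so your parenthetical ``or $N$ itself'' is vacuous.
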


\begin{theorem}\label{thm2}
A solution $\varphi$ satisfies efficiency, symmetry, and nullifying player neutrality if and only if it is the equal division solution, $\varphi\equiv\varphi^{ED}$.
\begin{proof}
Efficiency and symmetry follow immediately from the
definition of the equal division solution. We show that $\varphi^{ED}$ satisfies nullifying
player neutrality. Let $v,w,u\in V^N$ and let $i\in N$ be nullifying in both $w$ and $u$,
with $w(N)=u(N)$. Since the equal division solution depends only on the worth of the
grand coalition, we have
\[
\varphi^{ED}_i(v+w)=\frac{(v+w)(N)}{n}
=\frac{v(N)+w(N)}{n}
=\frac{v(N)+u(N)}{n}
=\frac{(v+u)(N)}{n}
=\varphi^{ED}_i(v+u).
\]
Thus, $\varphi^{ED}$ satisfies nullifying player neutrality.

Conversely, suppose that $\varphi$ satisfies efficiency, symmetry, and nullifying player
neutrality. By Proposition~\ref{equiv_nullifying}, nullifying player neutrality is equivalent
to coalitional standard equivalence. Hence, $\varphi$ satisfies efficiency, symmetry, and
coalitional standard equivalence. The result then follows from Theorem~\ref{thm_EDCSE}.
\end{proof}
\end{theorem}

Two points of comparison emerge between Theorem \ref{thm2} and the results reported in \citet{Brink07}.

The first one concerns Theorem~\ref{thm_EDCSE} itself, on which our proof directly builds. Both theorems use efficiency and symmetry, and differ only in the formulation of their third axiom: coalitional standard equivalence in Theorem \ref{thm_EDCSE} and nullifying player neutrality in Theorem \ref{thm2}. By Proposition \ref{equiv_nullifying} these two axioms are equivalent without any additional structural assumptions, so the two theorems are logically equivalent. The contribution of Theorem \ref{thm2} is therefore to offer an alternative formulation of the same characterization.

The second point involves the alternative characterization of the equal division solution obtained by \citet{Brink07} (Theorem~3.1 therein) using a distinct set of axioms.

\begin{theorem}[\citealt{Brink07}]\label{thm_ED}
A solution satisfies efficiency, linearity, symmetry, and the nullifying player property if and only if it is the equal division solution, $\varphi\equiv\varphi^{ED}$.
\end{theorem}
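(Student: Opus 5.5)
Sufficiency is a direct check. Efficiency and linearity are immediate from $\varphi^{ED}_i(v)=v(N)/n$. Every pair of players receives the same payoff, so symmetry holds. If $i$ is nullifying in $v$, then taking $S=N\setminus\{i\}$ gives $v(N)=0$, so $\varphi^{ED}_i(v)=0$.

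For necessity, I would work in the canonical basis $\{e_T\}$ rather than the unanimity basis, since nullifying players are easy to read off there. Fix a nonempty $T\subseteq N$. Every $i\notin T$ is nullifying in $e_T$: for $S\subseteq N\setminus\{i\}$, the set $S\cup i$ contains $i$, so it differs from $T$ and $e_T(S\cup i)=0$. By linearity,
$$\varphi(v)=\sum_{T\neq\emptyset}\gamma_T\,\varphi(e_T),$$
so it suffices to show $\varphi(e_T)=\varphi^{ED}(e_T)$ for every $T$.

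I split into two cases.

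\emph{Case $T\neq N$.} All players outside $T$ are nullifying, so the nullifying player property gives them $0$. Players in $T$ are pairwise symmetric in $e_T$, so they receive a common payoff $c$. Efficiency gives $tc=e_T(N)=0$, so $c=0$. Hence $\varphi(e_T)=0=\varphi^{ED}(e_T)$. The symmetry claim needs checking: for $i,j\in T$ and $S\subseteq N\setminus\{i,j\}$, we have $S\cup i=T$ if and only if $S=T\setminus\{i\}$, but $j\in T\setminus\{i\}$ while $j\notin S$. So $S\cup i\neq T$, and likewise $S\cup j\neq T$, giving $e_T(S\cup i)=e_T(S\cup j)=0$.

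\emph{Case $T=N$.} No player is nullifying, but all players are pairwise symmetric by the argument just given. Efficiency then yields $\varphi_i(e_N)=1/n=\varphi^{ED}_i(e_N)$.

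Linearity then concludes $\varphi\equiv\varphi^{ED}$. The only step requiring care is the symmetry verification in $e_T$, including the case $|T|=1$, where the symmetric group is trivial but efficiency alone gives the conclusion. No step is a genuine obstacle. Alternatively, Proposition~\ref{prop_additivity_nullifying} shows that under linearity the statement is equivalent to one with nullifying player neutrality, so the same argument applies to that variant.
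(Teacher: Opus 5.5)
Your proof is correct. It takes a different route from the paper, which never proves Theorem~\ref{thm_ED} directly but imports it from \citet{Brink07}. Within the paper's framework the statement is instead recovered by reduction. Under linearity, the nullifying player property implies nullifying player neutrality (Proposition~\ref{prop_additivity_nullifying}). Theorem~\ref{thm2}, which is proved by combining Proposition~\ref{equiv_nullifying} with the cited Theorem~\ref{thm_EDCSE}, then yields $\varphi\equiv\varphi^{ED}$. That route shows Theorem~\ref{thm_ED} to be a corollary of the linearity-free characterization, but it ultimately rests on an external result.

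Your canonical-basis argument is self-contained and is the exact dual of the unanimity-basis proof of Theorem~\ref{thm1}: the players outside $T$ are null in $u_T$ and nullifying in $e_T$. The dual case is actually easier. Because $e_T(N)=0$ for $T\neq N$, efficiency annihilates every payoff in $e_T$ at once. So you need no cross-game comparison like the step in Theorem~\ref{thm1} showing that the null players' payoff $\eta$ does not depend on $T$.

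Your argument also gives slightly more than stated. You use linearity only to pull out the scalars $\gamma_T$. For every $\gamma\in\mathbb{R}$, however, the players outside $T$ remain nullifying in $\gamma e_T$, the players in $T$ remain pairwise symmetric, and $(\gamma e_T)(N)=\gamma\, e_T(N)$. Your case analysis therefore computes $\varphi(\gamma e_T)=\varphi^{ED}(\gamma e_T)$ directly, so additivity suffices.

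Some additivity requirement cannot be dropped, though. The solution $\varphi^2$ of Remark~\ref{remark_NfPPandNfPN} satisfies efficiency, symmetry, and the nullifying player property, yet it is not $\varphi^{ED}$.
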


Theorem \ref{thm2} strengthens Theorem \ref{thm_ED} in two distinct respects. First, the nullifying player property is replaced by the conceptually different axiom of nullifying player neutrality. Proposition \ref{prop_additivity_nullifying} shows that the two axioms are equivalent for linear solutions, but Remark \ref{remark_NfPPandNfPN} establishes that they are logically independent in general. Second, and more substantially, Theorem \ref{thm2} dispenses with linearity. The equal division solution is uniquely characterized by efficiency, symmetry, and nullifying player neutrality alone, with no appeal to any linearity requirement. 

This stands in contrast with the null case in the previous section, where linearity plays an essential and necessary role in Theorem \ref{thm1}. The asymmetry reflects the deeper structural difference identified in Proposition \ref{equiv_nullifying}: null player neutrality is a genuine weakening of coalitional strategic equivalence, whereas nullifying player neutrality is equivalent to the full strength of coalitional standard equivalence. It is precisely this equivalence, which does not require linearity, that allows Theorem~\ref{thm2} to bypass any linearity assumption.

All axioms in Theorem \ref{thm2} are necessary for the characterization.

\begin{remark}\label{independence_thm2}
Independence of axioms in Theorem \ref{thm2}.
\begin{itemize}
\item The null solution $\varphi^0(v)=0$ for any $v \in \V^N$ satisfies symmetry and nullifying player neutrality, but violates efficiency.
\item The solution $\varphi=(0,\frac{v(N)}{n-1},\dots,\frac{v(N)}{n-1})$ satisfies efficiency and nullifying player neutrality, but violates symmetry.
\item The Shapley solution $\varphi^{Sh}$ satisfies efficiency and symmetry, but violates nullifying player neutrality.
\end{itemize}
\end{remark}

\section{Conclusions}\label{sec_conclusion}

This paper introduces and systematically analyzes the axiom of \emph{null player neutrality} in the context of TU-games. The axiom is a weakening of the classical coalitional strategic equivalence of \citet{Chun91}. Rather than requiring that augmenting a game by a null-player game leaves the null player's payoff entirely unchanged, it only requires that any payoff change be independent of which specific null-player game is used, provided the grand-coalition value is held fixed. This seemingly modest relaxation has substantial normative and structural implications.

Our main characterization (Theorem~\ref{thm1}) shows that efficiency, linearity, symmetry, and null player neutrality together pin down exactly the one-parameter family $\{\alpha\varphi^{ED} + (1-\alpha)\varphi^{Sh} : \alpha \in \mathbb{R}\}$. This result generalizes Shapley's classical characterization in a precise sense: it is obtained by replacing the null player property with the strictly weaker null player neutrality, while keeping all remaining axioms intact. The enlargement from a single solution to a one-parameter family reveals the exact normative content that the null player property adds to the structural axioms of efficiency, linearity, and symmetry, it is precisely this additional requirement that forces $\alpha = 0$ and singles out the Shapley value.

The extended family connects naturally to the existing literature. The subfamily with $\alpha \in [0,1]$ recovers the $\alpha$-egalitarian Shapley values of \citet{Joosten96}, axiomatized by \citet{Brink13} using weak monotonicity. Our Theorem~\ref{thm1} shows that replacing weak monotonicity with null player neutrality strictly enlarges the admissible set, admitting values with $\alpha < 0$ (which penalize equal division relative to Shapley) and $\alpha > 1$ (which more than fully compensate null players). Similarly, the subfamily with $\alpha \geq 0$ coincides with the family characterized by \citet{Casajus13} via the null player in a productive environment property, and our analysis reveals that, under linearity, this property implies null player neutrality but not conversely. 

The dual analysis for nullifying players, developed in Section~\ref{sec_further}, yields a striking contrast. While null player neutrality expands the admissible family to an entire one-parameter class, the analogous axiom for nullifying players collapses to the full strength of coalitional standard equivalence. As a consequence, efficiency, symmetry, and nullifying player neutrality together uniquely characterize the equal division solution (Theorem~\ref{thm2}), yielding a clean dichotomy: null player neutrality admits all real-linear combinations of the Shapley and equal division solutions, while nullifying player neutrality singles out equal division alone.

\newpage

\end{document}